\newtheorem{Theorem}{Theorem}[section]
\newtheorem{Definition}{Definition}[section]
\newtheorem{Proposition}{Proposition}[section]
\newtheorem{Example}{Example}[section]
\newcommand{\alglist}{
\begin{list}{Step 1}
{\setlength{\leftmargin}{1.1 in}\setlength{\labelwidth}{1.0 in}} }
\newenvironment{proof}{{\textbf{Proof.}}\,}{\hfill$\hbox{\rule{5pt}{5pt}}$\\}
\begin{document}

\title{ An Irreducible Polynomial Functional Basis of Two-dimensional Eshelby Tensors
 }

\author{Zhenyu Ming \and  Liping Zhang\thanks{Department of Mathematical Sciences, Tsinghua University, Beijing 100084,  China. The first two authors were supported by the National Natural Science Foundation of China (Grant No. 11271221, 11771244). E-mail:  mingzy17@mails.tsinghua.edu.cn; lzhang@math.tsinghua.edu.cn}
\quad\quad Yannan Chen\thanks{School of Mathematics and Statistics, Zhengzhou University, Zhengzhou 450001, China (ynchen@zzu.edu.cn).
This author was supported by the National Natural Science Foundation of China (Grant No. 11571178, 11771405).}
}

\date{}
\maketitle

\begin{abstract}
Representation theorems for both isotropic and anisotropic functions are of prime importance in both theoretical and applied mechanics.
In this article, we discuss about two-dimensional Eshelby tensors (denoted as  $\bf M^{(2)}$), which have wide applications in many fields of mechanics. Based upon the complex variable method, we obtain an integrity basis of ten isotropic invariants of $\bf M^{(2)}$. Since an integrity basis is always a polynomial functional basis, we further confirm that this integrity basis is also an irreducible polynomial functional basis of $\bf M^{(2)}$.

{\bf Key\ words.}  Eshelby tensor, representation theorem, irreducible functional basis, isotropic invariant.

\end{abstract}


\section{Introduction}

The Eshelby problem for linear elasticity is to deal with the fields in an infinitely region $\Omega$ induced by releasing either transformation strains, or eigenstrains in a subdomain $\omega$, called an inclusion. Precisely addressing this objection, the strain field $\epsilon_{ij}(x)$ can be linearly expressed in the form
$$\epsilon_{ij}(x)=\Sigma^{\omega}_{ijkl}(x)\epsilon^0_{ij},$$
where $\epsilon^0_{ij}$ is a constant second order eigenstrain tensor, and the fourth order tensor $\Sigma^{\omega}_{ijkl}(x)$ is the Eshelby's tensor field, corresponding to a inclusion $\omega$. Eshelby \cite{eshelby1957,eshelby1959} proved that Eshelby's tensor field is uniform inside $\omega$ if the inclusion is elliptic and ellipsoidal in  two- or three-dimensional elasticity, respectively. In such a condition, the Eshelby's tensor field $\Sigma^{\omega}$ is called Eshelby tensor and this significant property is called Eshelby's uniformity. Eshelby tensors and Eshelby's uniformity have wide application in a great number of engineering and physical fields, such as elliptical and non-elliptical inclusions \cite{Jiang99,Michelitsch2003,Zou13}, matrix-inclusion composites \cite{huang1994,Pan04,Roatta1997,Wang18,Zou17}, non-uniform Gaussian and exponential eigenstrain within ellipsoids \cite{Sharma2003}, to name a few. Besides, Eshelby problem further became the subject of extensive studies \cite{Bacon1980,Buryachenko2001, Mura1987,Ting1996}.

 On the other hand, tensor function representation theory is well established and of prime importance in both theoretical and applied mechanic \cite{zheng1}. It was introduced to describe general consistent invariant forms of the nonlinear constitutive equations and to determine the number and the type of scalar variables
involved. In the latter half of the twentieth century, representations in complete and irreducible forms of vectors, second order symmetric tensors and second order skew-symmetric tensors for both isotropic and hemitropic invariants, were thoroughly investigated by Wang \cite{W-701,W-702,W-7071}, Smith \cite{Sm-71}, Boehler \cite{Boe-87}, and Zheng \cite{zheng1}. As for higher order tensors, Smith and Bao \cite{sbao}  presented minimal integrity bases for third and fourth order symmetric and traceless tensors. In 2014, an integrity basis with thirteen isotropic invariants of a symmetric third order three-dimensional tensor was presented by Olive and Auffray \cite{oa2014}.   It was noted that the Olive-Auffray integrity basis is actually a minimal integrity basis \cite{oa2017}.   In 2017, Olive, Kolev and Auffray \cite{olive2017} gave a minimal integrity basis of the elasticity tensors, with 297 invariants. Very recently, a number of new results appeared.  Liu, Ding, Qi and Zou \cite{ding17} gave a minimal integrity basis and irreducible functional basis of isotropic invariants of the Hall tensors.   Chen, Hu, Qi and Zou \cite{chen18} showed that any minimal integrity basis of a third order three-dimensional symmetric and traceless tensor is indeed an irreducible functional basis of that tensor. Chen, Liu, Qi, Zheng and Zou \cite{zchen18} presented an eleven invariant irreducible functional basis for a third order three-dimensional symmetric tensor. This eleven invariant irreducible functional basis is a proper subset of the Olive-Auffray minimal integrity basis of the tensor.

Even though Eshelby tensors have wide applications in mechanics, its minimal integrity basis and irreducible functional basis haven't been decided yet.
Because the Eshelby tensor has a weaker symmetry than the elasticity tensor, the Eshelby tensor owns more independent elements. Moreover, in three-dimensional physical space, as a conclusion previously introduced, elasticity tensors have a minimal integrity basis with 297 invariants. For these two reasons, it may need a large number of invariants to form a minimal integrity basis of three-dimensional Eshelby tensors. Consequently, in this article we only study invariants of Eshelby tensors in two-dimensional physical space.





Complex variable method is our fundamental tool for recovering Eshelby tensors from a set of isotropic invariants, i.e., integrity basis. It was established by Pierce \cite{Pierce95}, and further applied to plane elasticity by Vianello \cite{complex}. In particular, Olive \cite{olive2017} gave a summary of this method in an algebraic viewpoint. By the orthogonal irreducible decomposition in \cite{Zheng2006}, the Eshelby tensor is factorized into six parts: three scalars $\lambda,\  \mu,\  \upsilon$, two second order irreducible (i.e., symmetric and traceless) tensor ${\mathbf D^1},\ {\mathbf D^2}$, a fourth order irreducible tensor ${\mathbf D}$. Therefore, we only need to study the actions on ${\mathbf D^1},\ {\mathbf D^2},\ {\mathbf D}$ of two-dimensional orthogonal group $\rm O_2$. It is a familiar conclusion in invariant theory that for any fixed positive integer $m$, the dimension of an $m$th order two-dimensional irreducible tensor space is two, which helps to construct a one-to-one correspondence between an irreducible tensor and a complex number. Moreover, there exists an isomorphism between the action of $\rm O_2$ on a second order irreducible tensor and the action of the same group on products of complex planes. Based on these two elementary facts, complexifying the problem becomes a useful approach to study the action of $\rm O_2$ on the second order irreducible tensor space.
The structure of this paper is as follows. To make our statement as self-contained as possible, we first give some notations and briefly review some basic definitions in theory of representations for tensor functions in Section \ref{Pre}. In Section \ref{22}, starting from the irreducible decomposition of $\bf M^{(2)}$, we further review the complex variable method and propose a set of ten polynomial isotropic invariants of $\bf M^{(2)}$. In Section \ref{33}, we prove that these ten invariants are functional irreducible. Consequently, we obtain an irreducible polynomial functional basis of $\bf M^{(2)}$, which is the main goal of this paper.


\section{Preliminaries}\label{Pre}

In this paper, we denote 2D as two-dimensional physical space, $\mathbb{H}^m$ as the $m$th order irreducible tensor space in 2D.
As classical terminology, $\rm O_2$ is the group of orthogonal transformations in 2D, and $\rm SO_2$ is the rotation subgroup of $\rm O_2$. $e_1:=(1,\ 0)^T$, $e_2:=(0,\ 1)^T$ are a pair of familiar orthonormal bases in 2D. ${\widetilde{Q}}$ is the reflection transformation such that ${\widetilde{Q}}e_1=e_1,\ {\widetilde{Q}}e_2=-e_2.$ Obviously, $\rm O_2$ is generated by $\rm O_2$ and ${\widetilde{Q}}$.

Let $\bf T$ be an $m$th order tensor represented by $T_{i_1i_2\ldots i_m}$ under some orthonormal coordinate. A scalar-valued polynomial function $f(T_{i_1i_2\ldots i_m})$ is called an polynomial isotropic invariant of $\bf T$, if the function value is independent of the selection of coordinate system, which means
\begin{equation}\label{O2-act}
f(T_{i_1i_2\ldots i_m})=f(Q_{i_1j_1}Q_{i_2j_2}\ldots Q_{i_mj_m}T_{j_1j_2\ldots j_m}).
\end{equation}
We could rewrite \eqref{O2-act} in a short form
$$f({\bf T})=f(Q\ast {\bf T}),$$
where $Q$ is an arbitrary orthogonal matrix, and $*$ calling the $\rm O_2$-action, defined as the right side in \eqref{O2-act}. Moreover, a set of tensors
$$\rm{O}_2\ast{\bf T} = \{g\ast{\bf T}:g\in \rm{O}_2\}$$
is called the $\rm O_2$-orbit of $\bf T$. Once we have one polynomial invariant, we could easily construct infinite number of polynomial invariants from it. For this reason, our main goal is to find a finite set of polynomial invariants separating the $\rm O_2$-orbits. Therefore, we introduce the definitions of integrity basis and functional basis as below.

\begin{Definition}\label{int}
{\bf (integrity basis)} Let $\{f_1,f_2,\ldots,f_n\}$ be a finite set of polynomial isotropic invariants of  $\bf T$. If any polynomial isotropic invariant of  $\bf T$ is polynomial in $f_1,f_2,\ldots,f_n$, we call the set $\{f_1,f_2,\ldots,f_n\}$ a set of integrity basis of  $\bf T$. In addition, an integrity basis is minimal
if no proper subset of it is an integrity basis.
\end{Definition}

Similarly, we give the definition of functional basis.

\begin{Definition}\label{func}
{\bf (functional basis)} Let $\{f_1,f_2,\ldots,f_n\}$ be a finite set of polynomial isotropic invariants of  $\bf T$. If
$$f_i({\bf T_1})=f_i({\bf T_2}),\ \text{for all}\ i=1,2,...,n$$
implies ${\bf T_1}=g*{\bf T_2}$ for some $g\in{\rm O_2}$, we call the set $\{f_1,f_2,\ldots,f_n\}$ a set of functional basis of  $\bf T$. In addition, a functional basis is minimal (or irreducible)
if no proper subset of it is a functional basis.
\end{Definition}

As known in invariant theory, the algebra of invariant polynomials on finite-dimensional representation $V$ of $\rm O_2$ is finitely generated \cite{Hi-93}. In other words, it claims the existence of a set of integrity basis of $\bf M^{(2)}$. Since integrity bases are also functional bases [3] (but not vise versa), both integrity bases and functional bases could separate orbits.

\section{Orthogonal Transformations and Ten Isotropic Invariants of $\bf M^{(2)}$ }\label{22}

\subsection{Orthogonal Irreducible Decomposition of $\bf M^{(2)}$}

Due to the minor index symmetry of $\bf M^{(2)}$, we have $M_{ijkl}^{(2)}=M_{jikl}^{(2)}=M_{ijlk}^{(2)}$ and the irreducible decomposition of $\bf M^{(2)}$ takes the form \cite{eshelby}:
\begin{equation}\label{de}
M_{ijkl}^{(2)}={\bf{\lambda}} \delta_{ij}\delta_{kl} +2{\bf{\mu}}\delta_{i\hat{k}}\delta_{j\hat{l}} +v (\delta_{i\hat{k}}\epsilon_{j\hat{l}}+\delta_{j\hat{k}}\epsilon_{i\hat{l}})  +\delta_{ij}D_{kl}^{1}+\delta_{kl}D_{ij}^{2}+D_{ijkl},
\end{equation}
where
\begin{equation}\label{sc}
\lambda=\frac{3}{8}M_{iikk}^{(2)}-\frac{1}{4}M_{ikik}^{(2)},\quad \mu=\frac{1}{4}M_{ikik}^{(2)}-\frac{1}{8}M_{iikk}^{(2)}\quad \text{and}\quad v=\frac{1}{4}\epsilon_{ij}M_{ikjk}^{(2)}
\end{equation}
are three scalars,
\begin{equation}\label{se}
D_{ij}^{1}=\frac{1}{2}M_{kkij}^{(2)}-\frac{1}{4}M_{kkll}^{(2)}\delta_{ij}\quad \text{and}\quad D_{ij}^{2}=\frac{1}{2}M_{ijkk}^{(2)}-\frac{1}{4}M_{kkll}^{(2)}\delta_{ij}
\end{equation}
are two second order irreducible tensors and $D_{ijkl}$ is a fourth order irreducible tensor deduced by (\ref{de}), (\ref{sc}) and (\ref{se}). Here, symbol \   $\hat{\cdot}$ \ in subscripts means calculating the average of circulant symmetric monomials. For example, $\delta_{i\hat{k}}\delta_{j\hat{l}}=\frac{1}{2}(\delta_{ik}\delta_{jl}+\delta_{il}\delta_{jk}).$

\subsection{Orthonormal Basis of $\mathbb{H}^2$ and $\mathbb{H}^4$ and Orthogonal Transformations}\label{Orth}
It is known that for any fixed positive integer $m$, the dimension of $\mathbb{H}^m$ is two. Inspired by \cite{complex}, we could find two appropriate pairs of orthonormal bases in $\mathbb{H}^2$ and $\mathbb{H}^4$, respectively. In particular, we choose
$$E_1=\frac{\sqrt{2}}{2}(e_1\otimes e_1-e_2\otimes e_2) \qquad \text{and}\qquad E_2=\frac{\sqrt{2}}{2}(e_1\otimes e_2+e_2\otimes e_1)$$
as a pair of  orthonormal bases in $\mathbb{H}^2$,
\begin{align*}
\mathbb{E}_1=\frac{\sqrt{8}}{8}
(e_1\otimes e_1\otimes e_1\otimes e_1
+e_2\otimes e_2\otimes e_2\otimes e_2
-e_1\otimes e_1\otimes e_2\otimes e_2
-e_1\otimes e_2\otimes e_1\otimes e_2\\
-e_2\otimes e_1\otimes e_1\otimes e_2
-e_2\otimes e_1\otimes e_2\otimes e_1
-e_1\otimes e_2\otimes e_2\otimes e_1
-e_2\otimes e_2\otimes e_1\otimes e_1)
\end{align*}
and
\begin{align*}
\mathbb{E}_2=\frac{\sqrt{8}}{8}
(e_1\otimes e_1\otimes e_1\otimes e_2
+e_1\otimes e_1\otimes e_2\otimes e_1
+e_1\otimes e_2\otimes e_1\otimes e_1
+e_2\otimes e_1\otimes e_1\otimes e_1\\
-e_2\otimes e_2\otimes e_2\otimes e_1
-e_2\otimes e_2\otimes e_1\otimes e_2
-e_2\otimes e_1\otimes e_2\otimes e_2
-e_1\otimes e_2\otimes e_2\otimes e_2)
\end{align*}
as a pair of orthonormal bases in $\mathbb{H}^4$. Here $\otimes$ stands for the tensor product. Similar to the representation of a vector in 2D Cartesian coordinates,
let $\theta_1,\ \theta_2,\ \theta_3$ be the angles between $\bf D^1$ and $E_1$, $\bf D^2$ and $E_1$, $\bf D$ and $\mathbb{E}_1$, respectively. Besides, some definitions are necessary:
\begin{align*}
&H_1:=|{\bf D^1}|\cos(\theta_1)={\bf D^1}\cdot E_1,  \quad H_2:={\bf D^1}|\sin(\theta_1)={\bf D^1}\cdot E_2,\\
& L_1:=|{\bf D^2}|\cos(\theta_2)={\bf D^2}\cdot E_1,  \quad\ L_2:={\bf D^2}|\sin(\theta_2)={\bf D^2}\cdot E_2,
\end{align*}
and
\begin{align*}
&K_1:=|{\bf D}|\cos(\theta_3)={\bf D}\cdot \mathbb{E}_1, \quad K_2:=|{\bf D}|\sin(\theta_3)={\bf D}\cdot \mathbb{E}_2.
\end{align*}

With some calculations, a rotation $Q(\theta)\in {\rm SO_2}$ could satisfy the following identities:
\begin{equation}\label{ro}
\begin{aligned}
&Q(\theta)*E_1=\cos(2\theta)E_1+\sin(2\theta)E_2,\ &Q(\theta)*E_2=-\sin(2\theta)E_1+\cos(2\theta)E_2,\\
&Q(\theta)*\mathbb{E}_1=\cos(4\theta)\mathbb{E}_1+\sin(4\theta)\mathbb{E}_2,\ &Q(\theta)*\mathbb{E}_2=-\sin(4\theta)\mathbb{E}_1+\cos(4\theta)\mathbb{E}_2.
\end{aligned}
\end{equation}
On the other hand, as for $\widetilde{Q}$, we have
\begin{equation}\label{re}
\begin{aligned}
\widetilde{Q}*E_1=E_1,\quad \widetilde{Q}*E_2=-E_2, \quad \widetilde{Q}*\mathbb{E}_1=\mathbb{E}_1,\quad \widetilde{Q}*\mathbb{E}_2=-\mathbb{E}_2.
\end{aligned}
\end{equation}
Resultingly, each $Q(\theta)\in{\rm SO_2}$ acts on $\mathbb{H}^2$ or $\mathbb{H}^4$ as a rotation of $2\theta$ or $4\theta$, respectively. While under the reflection transformation $\widetilde{Q}$, $E_1$ and $\mathbb{E}_1$ are unchanged but $E_2$ and $\mathbb{E}_2$ are turned to the opposite ones. In view of this conclusion,  a one-to-one mapping from an irreducible tensor in $\mathbb{H}^2$ or $\mathbb{H}^4$ to a complex number could be constructed. More precisely, ${\bf D^1},\ {\bf D^2}$ and  ${\bf D}$ are related to complex numbers $z_1$,\ $z_2$ and $z_3$ in sequence, described as
\begin{equation*}\label{z}
\begin{aligned}
&z_1=H_1+H_2\cdot i=|{\bf D^1}|\cdot e^{i\theta_1},\\
&z_2=L_1+L_2\cdot i=|{\bf D^2}|\cdot e^{i\theta_2},\\
&z_3=K_1+K_2\cdot i=|{\bf D}|\cdot e^{i\theta_3},
\end{aligned}
\end{equation*}
where $i$ is the imaginary unit. In other words, we regard the component of ``horizontal" axes as the real part of $z$, and the component of ``vertical" axes as the imaginary part of $z$.
Moreover, the action of $Q(\theta)$ or $\widetilde{Q}$ on the spaces $\mathbb{H}^2$ and $\mathbb{H}^4$ could be seen as an action on the complex plane $\mathbb{C}$. More precisely, according to \eqref{ro} and \eqref{re}, we have
\begin{equation}\label{complex1}
Q(\theta)*(z_1,\ z_2,\ z_3)=(z_1\cdot e^{i(2\theta)},\ z_2\cdot e^{i(2\theta)},\ z_3\cdot e^{i(4\theta)})\in\mathbb{C}^3
\end{equation}
and
\begin{equation}\label{complex2}
\widetilde{Q}*(z_1,\ z_2,\ z_3)=(\bar{z}_1,\ \bar{z}_2,\ \bar{z}_3)\in\mathbb{C}^3\\
\end{equation}
for any $\theta\in [0,\ 2\pi).$
\subsection{Polynomial Invariants of $\bf M^{(2)}$}
Based on orthogonal irreducible decompositions (\ref{de}) and formulas (\ref{complex1}) and (\ref{complex2}), It is now ready to search for a set of polynomial invariants of $\bf M^{(2)}$.
For any polynomial function $p$ of $\bf M^{(2)}$, we can rewrite $p(\bf M^{(2)})$ on the space $\mathbb{R}^3\times \mathbb{C}^3$:
\begin{equation}\label{poly}
p({\bf M^{(2)}})=\Sigma \ C_{abcdefgjk}\lambda^a\mu^b v^c z_1^d \bar{z}_1^e z_2^f \bar{z}_2^g z_3^j \bar{z}_3^k,
\end{equation}
where $a,b,c,d,e,f,g,j$ and $k$ are nine nonnegative integers. $C_{abcdefgjk}\in\mathbb{C}$ is coefficient of each monomial. Since $p({\bf M^{(2)}})$ is a real-valued polynomial, we have
\begin{equation*}\label{eq1}
C_{abcdefgjk}=\bar{C}_{abcedgfkj}.
\end{equation*}
In addition, because $\rm O_2$ is generated by $\rm SO_2$ and ${\widetilde{Q}}$, $p({\bf M^{(2)}})$ should be invariant under any rotation $Q(\theta)$ and the reflection $\widetilde{Q}$, which yields
\begin{equation}\label{O2}
p({\bf M^{(2)}})=p(Q(\theta)*{\bf M^{(2)}})\qquad \text{and}\qquad  p({\bf M^{(2)}})=p(\widetilde{Q}*{\bf M^{(2)}}),
\end{equation}
where $\theta$ is an arbitrary angle. In a viewpoint of complex field, the action of $Q(\theta)$ takes the form
$$p(Q(\theta)*{\bf M^{(2)}})=\Sigma \ C_{abcdefgjk}\lambda^a\mu^b v^c z_1^d \bar{z}_1^e z_2^f \bar{z}_2^g z_3^j \bar{z}_3^k \exp\{i\theta[2(d-e)+2(f-g)+4(j-k)]\}, $$
while the action of $\widetilde{Q}$ takes the form
$$p(\widetilde{Q}*{\bf M^{(2)}})=\Sigma \ C_{abcdefgjk}\lambda^a\mu^b v^c z_1^e \bar{z}_1^d z_2^g \bar{z}_2^f z_3^k \bar{z}_3^j. $$
Combining with \eqref{O2}, we conclude that degrees of each monomial in \eqref{poly} satisfy the Diophantine equation
\begin{equation}\label{eq3}
(d-e)+(f-g)+2(j-k)=0,
\end{equation}
and coefficients are restricted to
\begin{equation}\label{eq2}
C_{abcdefgjk}=C_{abcedgfkj}.
\end{equation}
From \eqref{O2} and \eqref{eq2}, we know that each coefficient $C_{abcdefgjk}$ is a real number. Furthermore, each monomial $C_{abcdefgjk}\lambda^a\mu^b v^c z_1^d \bar{z}_1^e z_2^f \bar{z}_2^g z_3^j \bar{z}_3^k$
should obey the Diophantine equation (\ref{eq3}).

A solution of Diophantine equation is called irreducible if it is not the sum of two or more nonnegative and nontrivial solutions. The following proposition gives a maximal irreducible solution of (\ref{eq3}) and deduce that any nonnegative solution of (\ref{eq3}) is a sum of these irreducible solutions.
For convenience, we denote $w=(d, e, f, g, j, k)$ as a vector of six components.
\begin{Proposition}\label{pr1}
Let
\begin{equation*}
\begin{array}{lll}
  w_1=(1, 1, 0, 0, 0, 0), & w_2=(0, 0, 1, 1, 0, 0), & w_3=(0, 0, 0, 0, 1, 1), \\
  w_4=(2, 0, 0, 0, 0, 1), & w_5=(0, 0, 2, 0, 0, 1), & w_6=(1, 0, 0, 1, 0, 0), \\
  w_7=(1, 0, 1, 0, 0, 1), & w_8=(0, 1, 1, 0, 0, 0), & w_9=(0, 1, 0, 1, 1, 0), \\
  w_{10}=(0,2, 0, 0, 1, 0), & w_{11}=(0,0, 0, 2, 1, 0). &
\end{array}
\end{equation*}
Then, {\rm (i)} $w_1$,\ldots,$w_{11}$ are eleven irreducible solutions of (\ref{eq3}). {\rm (ii)} Each non-negative solution of (\ref{eq3}) is a sum of these irreducible solutions.
\end{Proposition}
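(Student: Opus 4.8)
The plan is to notice that the set $S$ of all nonnegative integer solutions $w=(d,e,f,g,j,k)$ of \eqref{eq3} is a submonoid of $\mathbb{Z}_{\ge 0}^6$: it contains $0$ and is closed under addition, since \eqref{eq3} is linear and homogeneous. Both claims then reduce to elementary bookkeeping about $S$. Throughout I write $|w|=d+e+f+g+j+k$ for the coordinate sum and $w'\le w$ for the componentwise order.

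For (i), one first checks by direct substitution that each of $w_1,\dots,w_{11}$ lies in $S$. Irreducibility is then forced by a size argument: every coefficient of \eqref{eq3} --- namely $+1,-1,+1,-1,+2,-2$ against $d,e,f,g,j,k$ --- is nonzero, so no coordinate vector $e_\ell\in\mathbb{Z}_{\ge 0}^6$ lies in $S$, whence every nonzero element of $S$ has $|w|\ge 2$. Consequently any element of $S$ written as a sum of two or more nonzero elements of $S$ has coordinate sum at least $4$; but a glance at the list gives $|w_i|\le 3$ for every $i$, so no $w_i$ admits such a decomposition.

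For (ii), I would show by strong induction on $|w|$ that every $w\in S$ is a sum (with repetitions allowed) of $w_1,\dots,w_{11}$, the base case $w=0$ being the empty sum. For the inductive step it is enough to exhibit an index $i$ with $w_i\le w$, since then $w-w_i\in S$ (a difference of two solutions, still nonnegative) with $|w-w_i|<|w|$, and the hypothesis applies to $w-w_i$. To produce such $i$ when $w\ne 0$: if $\min(d,e)\ge 1$ take $w_1$; else if $\min(f,g)\ge 1$ take $w_2$; else if $\min(j,k)\ge 1$ take $w_3$. In the remaining case $de=fg=jk=0$, so at most one entry of each pair $\{d,e\},\{f,g\},\{j,k\}$ is positive, and I split on the sign of $j-k$. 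If $j>0$ (so $k=0$), then \eqref{eq3} becomes $(d-e)+(f-g)=-2j$; together with $de=fg=0$ this forces exactly one of: $d>0$, whence $e=f=0$, $g=d+2j$ and $w_6\le w$; or $d=0<f$, whence $g=0$, $e=f+2j$ and $w_8\le w$; or $d=f=0$, leaving $e+g=2j\ge 2$, in which case $w_9$, $w_{10}$ or $w_{11}$ divides $w$ according to whether $e,g\ge 1$, or $g=0$ (forcing $e=2j\ge 2$), or $e=0$ (forcing $g=2j\ge 2$). The case $k>0$ follows from the case $j>0$ via the involution $(d,e,f,g,j,k)\mapsto(e,d,g,f,k,j)$, which maps $S$ to itself and permutes the generators ($w_6\leftrightarrow w_8$, $w_7\leftrightarrow w_9$, $w_4\leftrightarrow w_{10}$, $w_5\leftrightarrow w_{11}$, the remaining three fixed). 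Finally, if $j=k=0$ then $d-e+f-g=0$ with $de=fg=0$ forces $w=(d,0,0,d,0,0)$ or $w=(0,e,e,0,0,0)$, handled by $w_6$ or $w_8$; the only remaining possibility is $w=0$, which is excluded.

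The single point needing care is the exhaustiveness of the case split in (ii), in particular that the spurious residual sub-cases --- such as $j>0$, $d=f=0$, $\{e,g\}=\{1,0\}$ --- really cannot arise; this is exactly where the parity of the right-hand side $2j$ (respectively $2k$) enters, since $e+g=2j$ (respectively $d+f=2k$) cannot equal $1$. No genuinely hard step is expected. As a byproduct, (i) and (ii) together identify $\{w_1,\dots,w_{11}\}$ with the full set of irreducible solutions of \eqref{eq3}: any irreducible element, being a sum of the $w_i$, must coincide with one of them.
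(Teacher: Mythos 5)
Your argument is correct and follows essentially the same route as the paper's: strong induction on the coordinate sum, peeling off one generator at a time, with the case analysis ultimately governed by the pairs $(d,e)$, $(f,g)$, $(j,k)$ and the sign of $j-k$. Your version is in fact tighter in two places the paper glosses over: you supply an actual proof of part (i) (every nonzero solution has coordinate sum at least $2$, so any decomposable solution has sum at least $4$, while all $|w_i|\le 3$), where the paper merely states it ``can be easily verified,'' and your explicit componentwise divisibility check together with the involution $(d,e,f,g,j,k)\mapsto(e,d,g,f,k,j)$ cleans up the somewhat garbled subcase $j=k$, $d=e$ of the paper's Case 1.
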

\begin{proof}
Property (i) can be easily verified. In order to prove property (ii), we denote
$\Gamma=d+e+f+g+j+k$
and complete the proof by mathematical induction.

When $\Gamma=2,\ 3$, it is easy to testify $w_1,\ldots,w_{11}$ form the whole feasible solutions of (\ref{eq3}) in these two cases. To take a further step, we assume that if the sum of six components of a solution is not more than $\Gamma$ ($\ge 3$), then this solution could be a sum of $w_1,\ldots,w_{11}$. Now we consider a new feasible solution $w=(d, e, f, g, j, k)$ satisfying $d+e+f+g+j+k=\Gamma+1$ and $d-e+f-g+2(j-k)=0$. We finish the proof within two cases.

\textbf{Case1:}\ if $j=k$, then we have $d-e+f-g=0$.

In this case, if we further assume $d=e$, which reduces to $f=g$ and $j+k=\Gamma+1\ge 4$, which implies $j,\ k\ge1$. Therefore, $(d, e, f, g, j-1, k-1)$ with the sum $\Gamma-1$ is also a solution. By the assumption, $(d, e, f, g, j-1, k-1)$ can be represented as the sum of $w_1,\ldots,w_{11}$. Combined with $(d, e, f, g, j, k)=(d, e, f, g, j-1, k-1)+w_3$, the conclusion is valid in this subcase.

If $d\neq e$, without loss of generality, let $d>e$, then $g>f$. Similarly, we have $d,\ g\ge1$ and $(d-1, e, f, g-1, j, k)$ can be represented as the sum of $w_1,\ldots,w_{11}$, indicating that the conclusion is also valid.

\textbf{Case2:}\ if $j\neq k$, we could assume $j>k$, thus $e+g\ge2+d+f\ge2$. Noticing that
\begin{equation*}
\begin{array}{l}
(d, e, f, g, j, k)=(d, e-1, f, g-1, j-1, k)+w_{9}:=u_1+w_{9},\\
(d, e, f, g, j, k)=(d, e-2, f, g, j-1, k)+w_{10}:=u_2+w_{10},\\
(d, e, f, g, j, k)=(d, e, f, g-2, j-1, k)+w_{11}:=u_3+w_{11},
\end{array}
\end{equation*}
and at least one of $u_1,\ u_2,\ u_3$ is a non-negative solution when $e+g\ge2$, which completes the proof.
\end{proof}

Then we relate each solution $(d, e, f, g, j, k)$ to a complex monomial $z_1^d \bar{z}_1^e z_2^f \bar{z}_2^g z_3^j \bar{z}_3^k$, so that eleven solutions $w_1,\ldots,w_{11}$ could correspond to eleven complex monomials. Since invariants are real-valued functions, we only need to consider the real parts of these complex monomials. Hence, we obtain seven different polynomial invariants $J_1,\ldots,J_7$ of $\bf M^{(2)}$ and their relations to ${\bf D^{1}},\ {\bf D^{2}}\ \text{and}\ {\bf D}$ are presented concurrently:
\begin{equation}\label{in}
\begin{aligned}
&w_1\rightarrow J_1:={\rm Re}(z_1\bar{z}_1)=|z_1|^2=H_1^2+H_2^2=D_{ij}^1\cdot D_{ij}^1,\\
&w_2\rightarrow J_2:={\rm Re}(z_2\bar{z}_2)=|z_2|^2=L_1^2+L_2^2=D_{ij}^2\cdot D_{ij}^2,\\
&w_3\rightarrow J_3:={\rm Re}(z_3\bar{z}_3)=|z_3|^2=K_1^2+K_2^2=D_{ijkl}\cdot D_{ijkl},\\
&w_4,\ w_{10}\rightarrow J_4:={\rm Re}(z_1^2\bar{z}_3)=(H_1^2-H_2^2)K_1+2H_1H_2K_2=D_{ij}^1\cdot D_{ijkl}\cdot D_{kl}^1,\\
&w_5,\ w_{11}\rightarrow J_5:={\rm Re}(z_2^2\bar{z}_3)=(L_1^2-L_2^2)K_1+2L_1L_2K_2=D_{ij}^2\cdot D_{ijkl}\cdot D_{kl}^2,\\
&w_6,\ w_8\rightarrow J_6:={\rm Re}(z_1\bar{z}_2)=H_1L_1+H_2L_2=D_{ij}^1\cdot D_{ij}^2,\\
&w_7,\ w_9\rightarrow J_7:={\rm Re}(z_1z_2\bar{z}_3)=H_1K_1L_1+H_1K_2L_2-H_2K_1L_2+H_2K_2L_1
=D_{ij}^1\cdot D_{ijkl}\cdot D_{kl}^2.
\end{aligned}
\end{equation}
In addition, we denote
$$J_8:=\lambda, \ J_9:=\mu, \ J_{10}:=v.$$
As a result of the discussion above, we finally obtain a set of ten polynomial isotropic invariants $\{J_1,\ldots,J_{10}\}$ of $\bf M^{(2)}$. In the next section, we will prove that these ten isotropic invariants are both minimal integrity bases and irreducible function bases of $\bf M^{(2)}$.
\section{Minimal Integrity Bases and Irreducible Functional Bases of $\bf M^{(2)}$}\label{33}

Now our aim is to prove $J_1,\ldots,J_{10}$ are both minimal integrity bases and irreducible function bases of $\bf M^{(2)}$. We first confirm that any isotropic polynomial invariant is polynomial in $J_1,\ldots,J_{10}$, which infers $J_1,\ldots,J_{10}$ are integrity bases. As we have mentioned, an integrity basis is always a functional basis, therefore, $J_1,\ldots,J_{10}$ also form a set of function basis of $\bf M^{(2)}$. Next, we claim that $J_1,\ldots,J_{10}$ are functionally irreducible. Consequently, they are proved to be a set of irreducible functional basis of $\bf M^{(2)}$ , which is the main goal of this paper.

First, we give the following proposition to show that $J_1,\ldots,J_{10}$ form a set of integrity basis of $\bf M^{(2)}$.
\begin{Proposition}\label{represent}
Any isotropic polynomial invariant of $\bf M^{(2)}$ is polynomial in $J_1,\ldots,J_{10}$.
\end{Proposition}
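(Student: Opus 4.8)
The plan is to show that any isotropic polynomial invariant $p(\mathbf{M}^{(2)})$, which by the discussion preceding the proposition can be written as a real linear combination of monomials $\lambda^a\mu^b v^c z_1^d\bar z_1^e z_2^f\bar z_2^g z_3^j\bar z_3^k$ whose exponent vectors $w=(d,e,f,g,j,k)$ satisfy the Diophantine equation \eqref{eq3}, is in fact a polynomial in $J_1,\ldots,J_{10}$. Since $\lambda=J_8$, $\mu=J_9$, $v=J_{10}$ appear as free factors, it suffices to treat a single monomial $z_1^d\bar z_1^e z_2^f\bar z_2^g z_3^j\bar z_3^k$ with $w$ a nonnegative solution of \eqref{eq3}, together with its complex conjugate $z_1^e\bar z_1^d z_2^g\bar z_2^f z_3^k\bar z_3^j$ (the sum of the two being the real-valued object actually occurring in $p$). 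The key input is Proposition \ref{pr1}(ii): write $w=\sum_{t} n_t w_t$ as a nonnegative integer combination of the eleven irreducible solutions $w_1,\ldots,w_{11}$.

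First I would attach to each $w_t$ the corresponding complex monomial $m_t(z)=z_1^{d_t}\bar z_1^{e_t}\cdots$, so that $z_1^d\bar z_1^e\cdots z_3^k=\prod_t m_t(z)^{n_t}$. Now observe from \eqref{in} that the $J_\ell$ were built precisely as the real parts of the monomials attached to $w_1,\ldots,w_{11}$: $m_1=z_1\bar z_1$, $m_2=z_2\bar z_2$, $m_3=z_3\bar z_3$ are already real and equal $J_1,J_2,J_3$; while $w_4$ and $w_{10}$ give conjugate monomials $z_1^2\bar z_3$ and $\bar z_1^2 z_3$ with $J_4=\tfrac12(z_1^2\bar z_3+\bar z_1^2 z_3)$, and similarly $(w_5,w_{11})\to J_5$, $(w_6,w_8)\to J_6$, $(w_7,w_9)\to J_7$. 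Thus each $m_t$ is either one of $J_1,J_2,J_3$ or one of the two conjugate "halves" of $2J_4,\ldots,2J_7$. The product $\prod_t m_t^{n_t}$ therefore expands into a product of factors each of the form $J_1,J_2,J_3$ or $(J_\ell \pm i\,N_\ell)$ where $N_\ell$ is the imaginary part of the relevant monomial (e.g. $N_4=2H_1H_2K_1-(H_1^2-H_2^2)K_2$, up to sign).

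The crux is then to argue that when this product is added to its conjugate — i.e. when we take $2\operatorname{Re}\big(\prod_t m_t^{n_t}\big)$ — the result is a polynomial in $J_1,\ldots,J_7$ alone, with the imaginary parts $N_\ell$ eliminated. This is the step I expect to be the main obstacle, and the honest way to handle it is via a syzygy argument: the $N_\ell$ are themselves polynomial invariants (they are $\operatorname{SO}_2$-invariant, and under $\widetilde Q$ they change sign, so products/powers giving an overall sign-even combination are genuine $\operatorname{O}_2$-invariants), hence by the already-established fact that an integrity basis exists and that $\{J_1,\ldots,J_{10}\}$ are candidates, one reduces to checking finitely many relations. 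Concretely I would record the elementary identities $J_4^2 + N_4^2 = J_1^2 J_3$, $J_5^2+N_5^2=J_2^2 J_3$, $J_6^2+N_6^2=J_1 J_2$, $J_7^2+N_7^2 = J_1J_2J_3$, together with the "cross" relations expressing products like $N_4 N_6$, $N_4 N_7$, $N_6 N_7$, $J_4 N_6 - N_4 J_6$, etc., as polynomials in the $J$'s (these come from multiplying out the defining monomials $z_1^2\bar z_3$, $z_1\bar z_2$, $z_1 z_2\bar z_3$ and matching real/imaginary parts). Using these, any monomial in the $m_t$'s, once symmetrized with its conjugate, collapses: every occurrence of an odd total power of the $N$'s is paired off by the Diophantine/degree bookkeeping (the exponent identity \eqref{eq3} is exactly what guarantees the total "$z$ vs $\bar z$" imbalance is zero, which forces the $N$-parity to be even), and even powers of $N_\ell$ are rewritten through the quadratic relations above.

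To organize the argument cleanly I would proceed by induction on $\Gamma=d+e+f+g+j+k$: the base cases $\Gamma\le 3$ are finite and checked directly (they give exactly $J_1,\ldots,J_7$ and their products), and for the inductive step one peels off one irreducible solution $w_t$ from $w$ as in Proposition \ref{pr1}, writes the corresponding monomial as $m_t$ times a lower-degree monomial, applies the induction hypothesis to the lower-degree part, and then uses the quadratic/cross syzygies to absorb the extra $m_t$ factor into the $J$'s after real-symmetrization. Since $\lambda,\mu,v$ ride along untouched, this yields $p\in\mathbb{R}[J_1,\ldots,J_{10}]$, completing the proof. The one place demanding genuine care is verifying the complete list of cross-syzygies among $J_4,\ldots,J_7$ and $N_4,\ldots,N_7$ and checking that they suffice to eliminate all mixed imaginary terms; I would tabulate these by direct expansion of the six generating monomials rather than attempt a slicker invariant-theoretic shortcut.
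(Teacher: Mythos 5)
Your argument is essentially the paper's own proof translated from polar to Cartesian complex notation: the paper likewise reduces a monomial to the irreducible Diophantine solutions and then eliminates the sine (i.e., imaginary) parts pairwise, and your cross-products $N_iN_j$ are precisely its auxiliary invariants $J_{11},\ldots,J_{16}$, which it verifies by direct computation to be polynomials in $J_1,\ldots,J_7$. One small slip worth noting: $J_4N_6-N_4J_6=-J_1N_7$ is odd under the reflection $\widetilde{Q}$ and hence is \emph{not} expressible in the $J$'s, but your proof never actually needs such a relation, since taking real parts leaves only terms of even total degree in the $N_\ell$, and these are fully handled by $N_\ell^2=|m_\ell|^2-J_\ell^2$ together with the six relations $N_iN_j\in\mathbb{R}[J_1,\ldots,J_7]$.
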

\begin{proof}
In the beginning, we rewrite the forms of $J_1,\ldots,J_7$ in (\ref{in}) by using $H:=|{\bf D^1}|,\ L:=|{\bf D^2}|,\ K:=|{\bf D}|$ as three norms of ${\bf D^1},\ {\bf D^2},\ {\bf D},$ and $\theta_1,\ \theta_2,\ \theta_3$ as three angles defined as in Section \ref{Orth}. More explicitly, we have
\begin{equation}\label{ex22}
\begin{array}{l}
J_1:=H_1^2+H_2^2=H^2$,\quad $J_2:=L_1^2+L_2^2=L^2$, \quad $J_3:=K_1^2+K_2^2=K^2,\\
J_4:=(H_1^2-H_2^2)K_1+2H_1H_2K_2=H^2K\cdot \cos(2\theta_1-\theta_3),\\
J_5:=(L_1^2-L_2^2)K_1+2L_1L_2K_2=L^2K\cdot \cos(2\theta_2-\theta_3),\\
J_6:=H_1L_1+H_2L_2=HL\cdot \cos(\theta_1-\theta_2),\\
J_7:=H_1K_1L_1+H_1K_2L_2-H_2K_1L_2+H_2K_2L_1=HKL\cdot \cos(\theta_1+\theta_2-\theta_3).\\
\end{array}
\end{equation}
$H,\ L,\ K,\ \theta_1,\ \theta_2$ and $\theta_3$ are independent to each other. Moreover, we introduce six scalar-valued functions of $H,\ L,\ K,\ \theta_1,\ \theta_2$ and $\theta_3$ as below:
\begin{equation}\label{ex11}
\begin{array}{ll}
 & J_{11}:=H^2L^2K^2\sin(2\theta_1-\theta_3)\sin(2\theta_2-\theta_3),\\
   & J_{12}:=H^3LK\sin(2\theta_1-\theta_3)\sin(\theta_1-\theta_2), \\
 &J_{13}:=H^3LK^2\sin(2\theta_1-\theta_3)\sin(\theta_1+\theta_2-\theta_3),\\
  &J_{14}:=HL^3K\sin(2\theta_2-\theta_3)\sin(\theta_1-\theta_2), \\
 &J_{15}:=HL^3K^2\sin(2\theta_2-\theta_3)\sin(\theta_1+\theta_2-\theta_3),\\
  &J_{16}:=H^2L^2K\sin(\theta_1-\theta_2)\sin(\theta_1+\theta_2-\theta_3).
\end{array}
\end{equation}
By some calculations, we have
\begin{equation*}
\begin{array}{lll}
J_{11}=J_6^2\cdot J_3-J_7^2,\quad &J_{12}=J_1\cdot J_7-J_4\cdot J_6,\quad &J_{13}=J_1\cdot J_3\cdot J_6-J_4\cdot J_7,\\
J_{14}=J_5\cdot J_6-J_2\cdot J_7,\quad &J_{15}=J_2\cdot J_3\cdot J_6-J_5\cdot J_7,\quad  &J_{16}=\frac{1}{2}[ J_1\cdot J_5-J_2\cdot J_4].
\end{array}
\end{equation*}

Thus, $J_{11},\ldots,J_{16}$ are polynomials in $J_1,\ldots,J_{7}$. In view of this, they are also polynomial invariants of $\bf M^{(2)}$ and we only need to testify that any polynomial invariant of $\bf M^{(2)}$ is polynomial in $J_1,\ldots,J_{16}$.

Recalling that each non-zero monomial
$$C_{abcdefgjk}\lambda^a\mu^b v^c z_1^d \bar{z}_1^e z_2^f \bar{z}_2^g z_3^j \bar{z}_3^k$$
should satisfy $C_{abcdefgjk}=C_{abcedgfkj}\in\mathbb{R}$ and the Diophantine equation (\ref{eq3}). Therefore, the remaining work is to prove that any sum of two conjugated monomials
$$W:=C_{abcdefgjk}	\ \{\ \lambda^a\mu^b v^c z_1^d \bar{z}_1^e z_2^f \bar{z}_2^g z_3^j \bar{z}_3^k+\lambda^a\mu^b v^c z_1^e \bar{z}_1^d z_2^g \bar{z}_2^f z_3^k \bar{z}_3^ j\ \}$$
with the degrees satisfying (\ref{eq3}) is polynomial in $J_1,\ldots,J_{16}$. Omit scalars, we denote
\begin{equation*}\label{exp}
\begin{aligned}
\hat{W}&:=z_1^d \bar{z}_1^e z_2^f \bar{z}_2^g z_3^j \bar{z}_3^k+z_1^e \bar{z}_1^d z_2^g \bar{z}_2^f z_3^k \bar{z}_3^j\\
&=2Re\{z_1^d \bar{z}_1^e z_2^f \bar{z}_2^g z_3^j \bar{z}_3^k\}\\
&=2H^{d+e}L^{f+g}K^{j+k}\cdot \cos[(d-e)\theta_1+(f-g)\theta_2+2(j-k)\theta_3],
\end{aligned}
\end{equation*}
and further define eight angles $\beta_1,\ldots,\beta_8$ as:
\begin{equation*}
\begin{array}{llll}
\beta_1=2\theta_1-\theta_3,\quad &\beta_2=2\theta_2-\theta_3,\quad&\beta_3=2\theta_1-\theta_2,\quad &\beta_4=\theta_1+\theta_2-\theta_3,\\
\beta_5=-\theta_1+\theta_2,\quad &\beta_6=-\theta_1-\theta_2+\theta_3,\quad &\beta_7=-2\theta_1+\theta_3,\quad &\beta_8=-2\theta_2+\theta_3.
\end{array}
\end{equation*}
Similar to the proof of Proposition \ref{pr1}, if $(d-e)+(f-g)+2(j-k)=0,$ the linear combination $(d-e)\theta_1+(f-g)\theta_2+2(j-k)\theta_3$ of $\theta_1,\ \theta_2$ and $\theta_3$, would also be a linear combination of $\beta_1,\ldots,\beta_8$, expressed by
\begin{equation*}\label{cos}
(d-e)\theta_1+(f-g)\theta_2+2(j-k)\theta_3=\alpha_1\beta_1+\cdots+\alpha_8\beta_8,
\end{equation*}
where $\alpha_1,\ldots,\alpha_8$ are all natural numbers.

With some simple calculations, $\cos[\alpha_1\beta_1+\cdots+\alpha_8\beta_8]$ is polynomial in $\cos(\beta_1),\ldots,\cos(\beta_8)$ and $\sin \beta_{i}\cdot \sin \beta_{j}$ ($i, j\in\{1,\ldots,8\}$). Considering the forms of $J_1,\ldots,J_{16}$ in \eqref{ex22} and \eqref{ex11}, we claim that each sum of two conjugated monomials $W$ is polynomial in $J_1,\ldots,J_{16}$.
Thus we finish the proof.
\end{proof}

To take a further step, we need to prove that $J_1,\ldots,J_{10}$ are functionally irreducible (then also polynomially irreducible).
\begin{Proposition}\label{function}
$J_1,\ldots,J_{10}$ are functionally irreducible.
\end{Proposition}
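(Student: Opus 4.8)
## Proof Proposal for Proposition 3.3

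The plan is to show that for each index $\ell \in \{1,\dots,10\}$, the remaining nine invariants $\{J_i : i \neq \ell\}$ fail to form a functional basis; that is, one can exhibit two tensors $\mathbf{M}^{(2)}_{(1)}$ and $\mathbf{M}^{(2)}_{(2)}$ that agree on all $J_i$ with $i\neq\ell$ but differ on $J_\ell$, so they cannot lie on the same $\mathrm{O}_2$-orbit (by the functional-basis property of the full set, which is already established via Proposition 3.2 together with the fact that integrity bases separate orbits). Because of the clean complex/coordinate description in \eqref{complex1}–\eqref{complex2} and the explicit trigonometric forms \eqref{ex22}, the construction can be carried out entirely at the level of the data $(\lambda,\mu,v,H,L,K,\theta_1,\theta_2,\theta_3)$, where $H=|\mathbf{D}^1|$, $L=|\mathbf{D}^2|$, $K=|\mathbf{D}|$ are free nonnegative reals and $\theta_1,\theta_2,\theta_3$ are free angles, and where the $\mathrm{O}_2$-action is generated by $(\theta_1,\theta_2,\theta_3)\mapsto(\theta_1+2\theta,\theta_2+2\theta,\theta_3+4\theta)$ and $(\theta_1,\theta_2,\theta_3)\mapsto(-\theta_1,-\theta_2,-\theta_3)$.

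I would organize the ten cases into natural groups. For $\ell\in\{8,9,10\}$ (the scalars $\lambda,\mu,v$): take all of $\mathbf{D}^1,\mathbf{D}^2,\mathbf{D}$ equal to zero and the other two scalars equal to zero, and let the two tensors differ only in the value of the $\ell$-th scalar; then every $J_i$ with $i\neq\ell$ vanishes on both while $J_\ell$ differs. For $\ell=3$ (i.e.\ $J_3 = K^2$): set $\mathbf{D}^1=\mathbf{D}^2=0$ and all scalars zero, so that $J_1,J_2,J_4,\dots,J_7$ all vanish identically regardless of $\mathbf{D}$, and pick two tensors with different $K$. Entirely analogous one-tensor-nonzero choices handle $\ell=1$ (kill $\mathbf{D}^2,\mathbf{D}$ and scalars, vary $H$) and $\ell=2$ (kill $\mathbf{D}^1,\mathbf{D}$ and scalars, vary $L$). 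The genuinely interactive cases are $\ell\in\{4,5,6,7\}$; for these I would keep the norms $H,L,K$ fixed and nonzero across the two tensors and only move the angles, using the fact that $J_4,J_5,J_6,J_7$ depend on the angles through the four combinations $\cos(2\theta_1-\theta_3)$, $\cos(2\theta_2-\theta_3)$, $\cos(\theta_1-\theta_2)$, $\cos(\theta_1+\theta_2-\theta_3)$ (and $J_1,J_2,J_3,J_8,J_9,J_{10}$ do not depend on angles at all). Concretely, I would exhibit two angle triples $(\theta_1,\theta_2,\theta_3)$ and $(\theta_1',\theta_2',\theta_3')$ that are \emph{not} related by the $\mathrm{O}_2$-action, agree on three of these four cosines, but disagree on the fourth; then choose the fixed norms so that exactly $J_\ell$ separates the two tensors while the other three angle-dependent invariants coincide.

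The main obstacle is exactly this last group of four cases: one must verify that the angle triples produced really are on different $\mathrm{O}_2$-orbits (equivalently, that the quadruple of cosines above, together with $\mathrm{sign}$ information carried by the sines, genuinely distinguishes orbits — which is essentially a restatement of why the full list $J_1,\dots,J_{16}$ is an integrity basis), and that no accidental coincidence among the other cosines occurs. A convenient bookkeeping device is to note that the $\mathrm{SO}_2$-part of the action fixes all of $2\theta_1-\theta_3$, $2\theta_2-\theta_3$, $\theta_1-\theta_2$, $\theta_1+\theta_2-\theta_3$, while $\widetilde{Q}$ negates all of them simultaneously; hence two angle triples are $\mathrm{O}_2$-equivalent iff these four linear forms agree up to a common sign flip. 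With this criterion, for each $\ell\in\{4,5,6,7\}$ I would write down an explicit small perturbation of one chosen form that leaves the other three forms' cosines unchanged (e.g.\ replacing an angle-combination value $\varphi$ by $-\varphi$ while holding the others, which changes $\cos$ of only the targeted combination if the others are chosen at $0$ or $\pi$), and check the non-equivalence directly against the sign criterion. The routine trigonometric verifications (evaluating the six or seven relevant $J_i$ on the two explicit tuples) I will not grind through here, but they are mechanical once the tuples are fixed.
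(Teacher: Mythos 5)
Your overall strategy coincides with the paper's: for each $\ell$ you exhibit two configurations of the data $(\lambda,\mu,v,H,L,K,\theta_1,\theta_2,\theta_3)$ that agree on every $J_i$ with $i\neq\ell$ but differ on $J_\ell$, and your treatment of $\ell\in\{1,2,3,8,9,10\}$ is exactly the paper's and is correct. The problem is your plan for the ``interactive'' cases, where you insist on keeping $H,L,K$ all fixed and \emph{nonzero} and moving only the angles. Put $u=\theta_1-\theta_2$ and $w=\theta_1+\theta_2-\theta_3$; then the four relevant combinations are $2\theta_1-\theta_3=u+w$, $2\theta_2-\theta_3=w-u$, $u$ and $w$, so they span only a two-dimensional space and cannot be perturbed independently. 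Worse, the identity $2\cos u\cos w=\cos(u+w)+\cos(w-u)$ yields the syzygy $2J_6J_7=J_1J_5+J_2J_4$, valid identically. Hence on the locus $L\neq0$ one has $J_4=(2J_6J_7-J_1J_5)/J_2$, i.e.\ $J_4$ \emph{is} a function of the remaining invariants there, and symmetrically $J_5=(2J_6J_7-J_2J_4)/J_1$ when $H\neq0$. So for $\ell=4$ and $\ell=5$ no separating pair with all norms nonzero exists, and your construction cannot be completed as described. (Your illustrative device of ``replacing $\varphi$ by $-\varphi$'' also does not change $\cos\varphi$ at all, so it cannot produce the required discrepancy.)

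The repair is the one the paper uses: degenerate a norm. For $\ell=4$ take $L=0$ with $H,K$ fixed and nonzero, so that $J_2=J_5=J_6=J_7=0$ identically while $J_4=H^2K\cos(2\theta_1-\theta_3)$ varies freely; symmetrically take $H=0$ for $\ell=5$ and $K=0$ for $\ell=6$. Only $\ell=7$ genuinely requires all three norms nonzero, and there the same syzygy forces you to first arrange $\cos u=0$ (equivalently $J_6=0$) before $\cos w$ can be flipped --- this is precisely what the paper's choice $\theta_1-\theta_2=\pi/2$ accomplishes. With these modifications your argument closes; without them, cases $\ell=4$ and $\ell=5$ fail.
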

\begin{proof}
Due to the orthogonal irreducible decompositions (\ref{de}), it is clear that three scalars $J_8,\ J_9,\ J_{10}$ are functionally irreducible, hence we only need to consider about $J_1,\ldots,J_7$. Our goal is to change the value of $J_s\ (s=1,\ldots,7)$ while the other six invariants are unchanged.

Case1: When $s=1$, let $L=K=0$, which leads to $J_2=\cdots=J_7=0$. However, $J_1$ will change when $H$ changes, so that $J_1$ can not be a function of the others.

Case2: When $s=2$, let $H=K=0$, which leads to $J_1=J_3=\cdots=J_7=0$. However, $J_2$ will change when $L$ changes, so that $J_2$ can not be a function of the others.

Case3: When $s=3$, let $H=L=0$, which leads to $J_1=J_2=J_4=\cdots=J_7=0$. However, $J_3$ will change when $K$ changes, so that $J_3$ can not be a function of the others.

Case4: When $s=4$, let $L=0$,\ $K$ and $H$ be two fixed and non-zero numbers, which leads to $J_2=J_5=J_6=J_7=0$ and $J_1=H^2$, $J_3=K^2$ are unchanged. However,  $J_4=H^2K\cdot \cos(2\theta_1-\theta_3)$ will change when $2\theta_1-\theta_3$ changes, so that $J_4$ can not be a function of the others.

Case5: When $s=5$, let $H=0$,\ $K$ and $L$ be two fixed and non-zero numbers, which leads to $J_1=J_4=J_6=J_7=0$ and $J_2=L^2$, $J_3=K^2$ are unchanged. However,  $J_5=L^2K\cdot \cos(2\theta_2-\theta_3)$ will change when $2\theta_2-\theta_3$ changes, so that $J_5$ can not be a function of the others.

Case6: When $s=6$, let $K=0$,\ $H$ and $L$ be two fixed and non-zero numbers, which leads to $J_3=J_4=J_5=J_7=0$ and $J_1=H^2$, $J_2=L^2$ are unchanged. However,  $J_6=HL\cdot \cos(\theta_1-\theta_2)$ will change when $\theta_1-\theta_2$ changes, so that $J_6$ can not be a function of the others.

Case7: When $s=7$, let $K$,\ $H$ and $L$ be three fixed and non-zero numbers, which leads to $J_1=H^2$, $J_2=L^2$ and $J_3=K^2$ are unchanged. Now let  $\theta_1=\frac{\pi}{2},\ \theta_2=0,\ \theta_3=\frac{3\pi}{4}$, so we have
\begin{center}
$J_4=\frac{\sqrt{2}}{2}H^2K$,\ $J_5=-\frac{\sqrt{2}}{2}L^2K$,\ $J_6=0$,\ $J_7=\frac{\sqrt{2}}{2}HKL$.
\end{center}
However, when $\theta_1=\frac{3\pi}{2},\ \theta_2=0,$ and $\theta_3=\frac{11\pi}{4}$, we have
\begin{center}
$J_4=\frac{\sqrt{2}}{2}H^2K$,\ $J_5=-\frac{\sqrt{2}}{2}L^2K$,\ $J_6=0$,\ $J_7=-\frac{\sqrt{2}}{2}HKL$.
\end{center}
Only the value of $J_7$ changes, so that $J_7$ can not be a function of the others.

In conclusion, $J_1,\ldots,J_{10}$ are functionally irreducible.
\end{proof}

As a result of Propositions \ref{represent} and \ref{function}, finally, we have the following theorem.
\begin{Theorem}\label{basis}
Define
\begin{equation*}
\begin{array}{l}
J_1:=D_{ij}^1\cdot D_{ij}^1,\quad J_2:=D_{ij}^2\cdot D_{ij}^2,\quad J_3:=D_{ijkl}\cdot D_{ijkl},\quad J_4:=D_{ij}^1\cdot D_{ijkl}\cdot D_{kl}^1,\\
J_5:=D_{ij}^2\cdot D_{ijkl}\cdot D_{kl}^2,\quad J_6:=D_{ij}^1\cdot D_{ij}^2,\quad J_7:=D_{ij}^1\cdot D_{ijkl}\cdot D_{kl}^2,\\
J_8:=\lambda, \quad J_9:=\mu, \quad J_{10}:=v,
\end{array}
\end{equation*}
where $\lambda,\ \mu,\ v,\ {\bf D^1},\ {\bf D^2},\ {\bf D}$ are three scalars, two $2nd$ order irreducible tensors and a $4th$ order irreducible tensor respectively in orthogonal irreducible decomposition (\ref{de}). Then $J_1,\ldots,J_{10}$ are both a set of minimal integrity basis and a set of irreducible functional basis of $\bf M^{(2)}$.
\end{Theorem}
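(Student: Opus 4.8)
The plan is to deduce Theorem~\ref{basis} directly from Propositions~\ref{represent} and~\ref{function}, together with the fact recalled in Section~\ref{Pre} that every integrity basis is automatically a functional basis. First I would check that $\{J_1,\dots,J_{10}\}$ is an integrity basis. Proposition~\ref{represent} shows that every isotropic polynomial invariant of $\mathbf M^{(2)}$ is a polynomial in $J_1,\dots,J_{16}$; and the identities displayed just after \eqref{ex11} express $J_{11},\dots,J_{16}$ as polynomials in $J_1,\dots,J_7$, hence in $J_1,\dots,J_{10}$. Substituting, every isotropic polynomial invariant of $\mathbf M^{(2)}$ is a polynomial in $J_1,\dots,J_{10}$, which is exactly Definition~\ref{int}. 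Applying the implication ``integrity basis $\Rightarrow$ functional basis'' then shows that $\{J_1,\dots,J_{10}\}$ is also a functional basis in the sense of Definition~\ref{func}.

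Next I would promote both conclusions to their extremal forms. Proposition~\ref{function} produces, for every index $s\in\{1,\dots,10\}$, two configurations of $\mathbf M^{(2)}$ on which $J_i$ agree for all $i\ne s$ while $J_s$ differs; since $J_s$ is itself an invariant, these two configurations lie in distinct $\mathrm{O}_2$-orbits, so the set $\{J_i : i\ne s\}$ fails to separate orbits. Consequently no proper subset of $\{J_1,\dots,J_{10}\}$ is a functional basis, i.e.\ the functional basis is irreducible. For minimality as an integrity basis, suppose a proper subset $S\subsetneq\{J_1,\dots,J_{10}\}$ were an integrity basis; then each $J_i\notin S$ would be a polynomial in the members of $S$, so $S$ would in particular be a functional basis, contradicting the irreducibility just obtained. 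Hence $\{J_1,\dots,J_{10}\}$ is also a minimal integrity basis.

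The only point in this argument that is not pure bookkeeping is reconciling the intrinsic tensor formulas quoted in the statement with the complex-variable quantities used in the proofs of Propositions~\ref{represent} and~\ref{function}: one must verify, using the explicit orthonormal bases $E_1,E_2$ of $\mathbb H^2$ and $\mathbb E_1,\mathbb E_2$ of $\mathbb H^4$ introduced in Section~\ref{Orth}, that $D_{ij}^1 D_{ij}^1=|z_1|^2$, $D_{ij}^1 D_{ijkl} D_{kl}^1=\mathrm{Re}(z_1^2\bar{z}_3)$, $D_{ij}^1 D_{ijkl} D_{kl}^2=\mathrm{Re}(z_1 z_2\bar{z}_3)$, and the analogous identities for $J_2,J_3,J_5,J_6$ --- this is precisely the content of \eqref{in}. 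Granting \eqref{in}, I expect no genuine obstacle: the whole substance of the theorem is carried by Propositions~\ref{represent} and~\ref{function}, and what remains is the short logical assembly described above.
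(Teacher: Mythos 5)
Your proposal is correct and follows essentially the same route as the paper: the theorem is obtained by combining Proposition~\ref{represent} (integrity basis, hence functional basis) with Proposition~\ref{function} (functional irreducibility, which also forces minimality of the integrity basis), exactly as the paper does. Your added remarks on reconciling the tensor contractions with the complex monomials via \eqref{in}, and on why functional irreducibility implies no proper subset can be an integrity basis, just make explicit what the paper leaves implicit.
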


Two examples are further exhibited to confirm the correctness of this method.
\begin{Example}
We list three simple polynomial invariants of $\bf M^{(2)}$ and their polynomial representations by presented basis.
\begin{equation*}
\begin{array}{l}
D_{ij}^1\cdot D_{ijkl}\cdot D_{klpq}\cdot D_{pq}^1=(H_1^2+H_2^2)(K_1^2+K_2^2)=J_1\cdot J_3,\\
D_{ij}^2\cdot D_{ijkl}\cdot D_{klpq}\cdot D_{pq}^2=(L_1^2+L_2^2)(K_1^2+K_2^2)=J_2\cdot J_3,\\
D_{ij}^1\cdot D_{ijkl}\cdot D_{klpq}\cdot D_{pq}^2=\frac{1}{2}(H_1L_1+H_2L_2)(K_1^2+K_2^2)=\frac{1}{2}J_3\cdot J_6.\\
\end{array}
\end{equation*}
\end{Example}
\begin{Example}
According to this method, we can also obtain a minimal integrity basis for 2D elasticity tensors (denoted as $C^{(2)}_{ijkl}$), which has been discussed by Vianello in \cite{complex}.
From the irreducible decomposition of {\rm 2D} elasticity tensors, we know that there are two scalars, one $2nd$ order {\rm 2D} irreducible tensor and one $4th$ order {\rm 2D} irreducible tensor, denoted as $\delta$, $\epsilon$, $E^2$ and $E^4$, respectively. Similarly, we know that any monomial
$$C_{abcdef}\delta^a\epsilon^bz_1^c\bar{z}_1^dz_2^e\bar{z}_2^f$$
should satisfy $C_{abcdef}=C_{abdcfe}\in\mathbb{R}$ and a linear Diophantine equation
\begin{equation}\label{exam}
c-d+2(e-f)=0,
\end{equation}
where
\begin{equation*}\label{co}
\begin{aligned}
&z_1:=H_1+H_2\cdot i=|E^2|\cdot e^{i\theta_1},\\
&z_2:=L_1+L_2\cdot i=|E^4|\cdot e^{i\theta_2}.
\end{aligned}
\end{equation*}
It is easy to find a maximal irreducible solution of  Diophantine equation \eqref{exam}:
\begin{equation*}
\begin{array}{lll}
w_1=(c,\ d,\ e,\ f)=(1,\ 1,\ 0,\ 0),\quad w_2=(c,\ d,\ e,\ f)=(0,\ 0,\ 1,\ 1),\\
w_3=(c,\ d,\ e,\ f)=(2,\ 0,\ 0,\ 1),\quad w_4=(c,\ d,\ e,\ f)=(0,\ 2,\ 1,\ 0).
\end{array}
\end{equation*}
These four solutions can be related to three polynomial invariants:
\begin{equation*}
\begin{aligned}
&w_1\rightarrow J_1:=Re(z_1\bar{z}_1)=|z_1|^2=H_1^2+H_2^2=E_{ij}^2\cdot E_{ij}^2,\\
&w_2\rightarrow J_2:=Re(z_2\bar{z}_2)=|z_2|^2=L_1^2+L_2^2=E_{ij}^4\cdot E_{ij}^4,\\
&w_3,\ w_4\rightarrow J_3:=Re(z_1^2\bar{z}_2)=(H_1^2-H_2^2)L_1+2H_1H_2L_2=E_{ij}^2\cdot E_{ijkl}^4\cdot E_{kl}^2,
\end{aligned}
\end{equation*}
which fit the conclusions in \cite{complex}.

\end{Example}

\section*{Acknowledgments}  The authors would like to express their gratitude to Prof. Wennan Zou, for his encouragement during the course of this work
and for many useful discussions.


\end{document}